\documentclass[letterpaper]{article} 
\usepackage[hyphens]{url}  
\usepackage{graphicx} 
\usepackage{natbib}  
\usepackage{caption} 
\usepackage{algorithm}
\usepackage{algorithmic}

\usepackage{newfloat}
\usepackage{listings}
\DeclareCaptionStyle{ruled}{labelfont=normalfont,labelsep=colon,strut=off} 
\floatstyle{ruled}
\newfloat{listing}{tb}{lst}{}
\floatname{listing}{Listing}

\usepackage{booktabs}

\usepackage{amsmath,amssymb,amsthm}
\usepackage{mathtools}
\usepackage{geometry}
\usepackage{hyperref}
\usepackage{booktabs}
\usepackage{enumitem}
\usepackage{xcolor}
\usepackage{microtype}
\usepackage{bm}
\usepackage{array}

\newtheorem{theorem}{Theorem}[section]

\newtheorem{corollary}[theorem]{Corollary}

\newtheorem{remark}[theorem]{Remark}
\newtheorem{definition}[theorem]{Definition}

\newcommand{\E}{\mathbb{E}}

\newcommand{\R}{\mathbb{R}}
\newcommand{\Cov}{\operatorname{Cov}}

\newcommand{\tr}{\operatorname{tr}}

\newcommand{\norm}[1]{\left\lVert #1 \right\rVert}

\newcommand{\Regret}{\mathcal{R}}
\newcommand{\phat}{\hat{\pi}}
\newcommand{\pstar}{\pi^{*}}
\newcommand{\calZ}{\mathcal{Z}}
\newcommand{\calC}{\mathcal{C}}
\newcommand{\calP}{\mathcal{P}}
\newcommand{\calL}{\mathcal{L}}   

\newcommand{\CovFun}{\mathbf{C}}

\newcommand{\grad}{\nabla}
\newcommand{\Hess}{\mathbf{H}}

\newcommand{\Sig}{\Sigma}

\title{Optimizing Regret}
\author {
    Irene Aldridge \footnote{
    Risk AI Center\\
    irene@riskaicenter.com
}
}

\date{}

\begin{document}

\maketitle

\begin{abstract}
Building on the identity that expected regret equals the covariance between costs and decisions, this paper develops a derivative theory of the covariance regret functional. We derive the G\^ateaux derivative, showing that the universal steepest-descent direction is the contrarian
policy $-(c-\bar c)$, while ascent yields momentum. For linear policies $\hat\pi(c)=Ac+b$, the gradient is the cost covariance matrix $\Sigma_c$, with a zero Hessian implying boundary-optimal solutions such as the minimum-variance portfolio. We extend to constrained optimization, sign-gradient duality between regret minimization and alpha maximization, finite-sample
convergence bounds paralleling Thompson Sampling, and gradient-descent algorithms requiring only input observations.
\end{abstract}


\section{Introduction and Motivation}

\cite{Aldridge2026} establishes that for linear optimization systems, the expected regret of any
policy $\hat\pi$ equals the covariance between uncertain costs and decisions:
\begin{equation}
\mathbb{E}[R(c)] = \mathrm{Cov}\big(c,\hat\pi(c)\big).
\label{eq:regret-identity}
\end{equation}
This identity transforms performance measurement into a computable scalar. The natural next
question is: in which direction should the policy change to reduce regret? Answering it requires
differentiating the covariance functional with respect to the policy.

This paper derives the complete derivative theory for $\mathrm{Cov}(c,\hat\pi(c))$ as a functional of the policy mapping $\hat\pi:\mathcal{C}\to\mathcal{Z}$: Sections~\ref{sec:derivative}--\ref{sec:sensitivity} derive its first- and second-order structure, constrained optima, and duality properties; Sections~\ref{sec:algorithms} and~\ref{sec:finite_sample} give gradient-based algorithms and finite-sample guarantees; and Section~\ref{sec:financial} presents financial applications.

\section{Related Work}
\label{sec:related}

The covariance identity $\mathbb{E}[R(c)] = \mathrm{Cov}(c, \hat\pi(c))$ of \cite{Aldridge2026} is a static optimization result, but the gradient algorithms and convergence guarantees we derive from it
(Sections~\ref{sec:algorithms}--\ref{sec:finite_sample}) connect to several strands of the no-regret learning literature.

\textbf{Online convex optimization.} Projected online gradient descent achieves $O(\sqrt{T})$ regret against an oblivious adversary \citep{zinkevich2003}, with $O(\log T)$ rates for strongly convex or exp-concave losses \citep{HazanAgarwalKale2007,ShalevShwartz2007}; see \cite{hazan2016} for a survey. Our covariance functional $C[A] = \mathrm{tr}(A\Sigma_c)$ is linear in  $A$, so the projected gradient step $A_{k+1} = \Pi_{\mathcal{Z}}(A_k - \eta\hat\Sigma_c)$ attains a linear convergence rate (Theorem~\ref{thm:convergence}) rather than the generic $O(1/\sqrt{T})$ OCO
bound because the functional has zero curvature (Theorem~\ref{thm:hessian}) and the gradient is estimable from cost observations alone, without access to decision outputs.

\textbf{Regret matching and CFR.} Counterfactual regret minimization \citep{ZinkevichEtAl2007} builds on the Regret matching dynamics of \cite{HartMasColell2000}, which converge to correlated equilibria by
weighting actions in proportion to cumulative regret. Our sign-gradient duality (Theorem~\ref{thm:sign_duality}) is structurally analogous: the regret-minimizing direction $-\Sigma_c$ and the alpha-maximizing direction $+\Sigma_c$ are opposite projections of the same gradient. Unlike RM/CFR,
however, our policy space is a function class evaluated under a fixed cost distribution, so our guarantee is a static convergence result rather than a per-round regret bound. 

\textbf{Thompson Sampling.} \cite{AgrawalGoyal2013} establish matching problem-dependent and problem-independent finite-sample regret bounds for Thompson Sampling. Section~\ref{sec:finite_sample} derives direct analogs for our setting, with the cost covariance gaps $\Delta_{ij} = [\Sigma_c]_{ij}$ playing the role of the arm gaps $\Delta_i$.

\textbf{Online portfolio selection.} Universal portfolios \citep{cover1991} and related methods \citep{HelmboldEtAl1998} minimize worst-case regret against an adversarial return sequence on the probability simplex. Our framework instead minimizes expected regret under a fixed cost distribution over a general feasible set (long-short, leverage, and factor constraints). 
 
\section{The Derivative of the Covariance Functional}
\label{sec:derivative}

\subsection{The functional and its domain}

Let $c \in \calC \subseteq \R^{d}$ be a random cost vector with
mean $\bar{c} = \E[c]$ and covariance matrix
$\Sig_c = \Cov(c,c) \in \R^{d \times d}$. A \emph{policy}
$\phat \in \calP$ is a measurable map from $\calC$ to the feasible
set $\calZ \subseteq \R^{n}$. The \emph{covariance functional} is
\begin{equation}\label{eq:cov_fun}
  \CovFun[\phat]
  \;=\; \Cov\!\bigl(c, \phat(c)\bigr)
  \;=\; \E\!\bigl[(c - \bar{c})^{\top}(\phat(c) - \E[\phat(c)])\bigr]
  \;\in\; \R.
\end{equation}
By \cite{Aldridge2026}, $\CovFun[\phat] = \E[\Regret]$ whenever
$\E[\phat(c)] = \pstar$. We study $\CovFun$ as an object to be
optimized over $\calP$.

\subsection{Gâteaux Derivative and the Universal Descent Direction}

The Gâteaux derivative of $C$ at $\hat\pi$ in the direction of $\phi: \mathcal{C} \to \mathbb{R}^n$ is
\begin{equation}
D C[\hat\pi](\phi) = \lim_{\epsilon \to 0} \frac{C[\hat\pi + \epsilon\phi] - C[\hat\pi]}{\epsilon}.
\end{equation}

\begin{theorem}[Gâteaux derivative of covariance regret]
\label{thm:gateaux}
Let $\hat\pi \in \mathcal{P}$ be any policy with finite second moments. For any direction
$\phi: \mathcal{C} \to \mathbb{R}^n$,
\begin{equation}
D C[\hat\pi](\phi) = \mathrm{Cov}\big(c, \phi(c)\big).
\end{equation}
\end{theorem}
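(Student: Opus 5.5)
The plan is to exploit the fact that the covariance functional $\CovFun$ in \eqref{eq:cov_fun} is \emph{linear} in the policy. Once linearity is established, the difference quotient in the definition of the Gâteaux derivative is constant in $\epsilon$, and the limit is immediate.

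\textbf{Step 1: well-posedness.} Restrict to directions $\phi$ with finite second moments, $\E\norm{\phi(c)}^2<\infty$. This is implicit in the statement, since otherwise $\Cov(c,\phi(c))$ need not exist. Also assume $c$ has finite second moments, which is already needed for $\Sig_c$ to be defined. By Cauchy--Schwarz, $\E\bigl|(c-\bar c)^\top \psi(c)\bigr| \le (\tr \Sig_c)^{1/2}\,(\E\norm{\psi(c)}^2)^{1/2}$ for any square-integrable $\psi$. Hence $\CovFun[\phat]$, $\CovFun[\phi]$ and $\CovFun[\phat+\epsilon\phi]$ are all finite, and every expectation below may be split term by term.

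\textbf{Step 2: linearity.} Expectation is linear, so $\E[(\phat+\epsilon\phi)(c)] = \E[\phat(c)] + \epsilon\,\E[\phi(c)]$. The centered decision therefore splits as
\[
(\phat+\epsilon\phi)(c) - \E[(\phat+\epsilon\phi)(c)] = \bigl(\phat(c)-\E\phat(c)\bigr) + \epsilon\bigl(\phi(c)-\E\phi(c)\bigr).
\]
Multiply on the left by $(c-\bar c)^\top$ and take expectations. This gives
\[
\CovFun[\phat+\epsilon\phi] = \CovFun[\phat] + \epsilon\,\Cov(c,\phi(c))
\]
exactly, for every $\epsilon$. There is no higher-order remainder.

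\textbf{Step 3: the limit.} The difference quotient $(\CovFun[\phat+\epsilon\phi]-\CovFun[\phat])/\epsilon$ equals $\Cov(c,\phi(c))$ for every $\epsilon\neq 0$. Its limit as $\epsilon\to 0$ is therefore $\Cov(c,\phi(c))$, which is the claimed derivative.

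I expect no real obstacle. The only care needed is the integrability bookkeeping in Step 1, which justifies splitting expectations. A secondary point is that $\phat+\epsilon\phi$ may leave the feasible set $\calZ$. The derivative is defined on the ambient linear space of square-integrable maps, so feasibility plays no role here and is deferred to the constrained analysis later in the paper. It is also worth remarking that $\CovFun$ is a bounded linear functional on $L^2$, represented by $c-\bar c$. This gives $c-\bar c$ as the Riesz gradient and immediately motivates the steepest-descent direction $\phi=-(c-\bar c)$.
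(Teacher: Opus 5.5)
Your proposal is correct and follows essentially the same route as the paper: you expand $\CovFun[\phat+\epsilon\phi]$ by linearity of expectation to get exactly $\CovFun[\phat]+\epsilon\,\Cov(c,\phi(c))$, so the difference quotient is constant in $\epsilon$. Your Step~1 integrability bookkeeping (restricting to square-integrable $\phi$ and bounding via Cauchy--Schwarz) makes explicit an assumption the paper leaves implicit.
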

\begin{proof}
Expanding $C[\hat\pi + \epsilon\phi] = \mathbb{E}[(c-\bar c)^\top(\hat\pi(c) + \epsilon\phi(c) -
\mathbb{E}[\hat\pi(c)] - \epsilon\mathbb{E}[\phi(c)])] = C[\hat\pi] + \epsilon\,\mathrm{Cov}(c,\phi(c))$, since $\mathbb{E}[c-\bar c]=0$ kills the cross term. Subtract $C[\hat\pi]$, divide by $\epsilon$, and
take the limit.
\end{proof}

Theorem~\ref{thm:gateaux} shows the derivative of the covariance functional is itself a covariance functional: the regret identity is preserved under differentiation, and the result is independent of the current policy $\hat\pi$. The functional gradient (Riesz representer) is therefore
$g(c) = c - \bar c$, giving:

\begin{corollary}[Universal steepest-descent direction]
\label{cor:universal}
The steepest-descent direction for $C$ in $L^2(P)$ is always $\phi^*(c) = -(c-\bar c)$, regardless of
$\hat\pi$; the steepest-ascent direction is $\phi^*(c) = +(c-\bar c)$.
\end{corollary}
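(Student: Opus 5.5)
The plan is to read Theorem~\ref{thm:gateaux} as a Riesz representation in the Hilbert space $L^2(P;\R^n)$ with inner product $\langle f,g\rangle = \E[f(c)^\top g(c)]$, and then apply Cauchy--Schwarz. Throughout I identify $c$ with its image in $\R^n$, as the pairing $(c-\bar c)^\top \phat(c)$ in \eqref{eq:cov_fun} implicitly requires (so $d=n$). I also assume $c$ has finite second moments, so that $g(c)=c-\bar c$ lies in $L^2(P)$.

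\textbf{Step 1: rewrite the derivative as an inner product.} For any direction $\phi\in L^2(P)$, Theorem~\ref{thm:gateaux} gives $D\CovFun[\phat](\phi)=\Cov(c,\phi(c))$. Because $\E[c-\bar c]=0$, the centering of $\phi$ drops out:
\[
\Cov(c,\phi(c)) = \E\bigl[(c-\bar c)^\top(\phi(c)-\E[\phi(c)])\bigr] = \E\bigl[(c-\bar c)^\top \phi(c)\bigr] = \langle g,\phi\rangle, \qquad g(c)=c-\bar c .
\]
So $D\CovFun[\phat]$ is the bounded linear functional $\phi\mapsto\langle g,\phi\rangle$, and $g$ is its Riesz representer. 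Neither the functional nor $g$ depends on $\phat$, which will give the ``regardless of $\phat$'' clause.

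\textbf{Step 2: solve the constrained problem.} The steepest-descent direction is by definition the minimizer of $D\CovFun[\phat](\phi)$ over $\norm{\phi}_{L^2}=1$, or equivalently over the ray of such directions. Cauchy--Schwarz gives
\[
\langle g,\phi\rangle \ge -\norm{g}\,\norm{\phi},
\]
with equality if and only if $\phi=-\lambda g$ for some $\lambda\ge 0$. The minimizing direction is therefore $\phi^*=-g/\norm{g}$, which up to positive scaling is $-(c-\bar c)$. Applying the same argument to the maximum, where $\langle g,\phi\rangle\le\norm{g}\,\norm{\phi}$, gives the ascent direction $+(c-\bar c)$.

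\textbf{Step 3: handle the equality case and degeneracies.} The equality condition in Cauchy--Schwarz holds $P$-a.s. in $L^2$, which is the correct notion of uniqueness here. If $\norm{g}^2=\tr\Sig_c=0$, then $c$ is a.s. constant, every direction is stationary, and the statement holds vacuously. The only genuine subtlety is Step 1: I have to check that $\phi(c)$ may be non-centered, and that constant shifts of $\phi$ lie in the kernel of $D\CovFun$. Because of this, the steepest direction is unique only modulo constants, and $-(c-\bar c)$ is its mean-zero representative. Beyond this, the argument is routine.
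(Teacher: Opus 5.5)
Your argument is essentially the paper's: you identify $c-\bar c$ as the Riesz representer of $D\CovFun[\phat]$, using $\E[c-\bar c]=0$ to drop the centering of $\phi$, and then apply Cauchy--Schwarz, $\Cov(c,\phi)\ge -\norm{\phi}_{L^2}\norm{c-\bar c}_{L^2}$, with equality along $-(c-\bar c)$. One correction to Step~3: under the full $L^2(P)$ norm the minimizer $-g/\norm{g}$ is unique, as your own equality case in Step~2 says, because adding a constant $k$ to it leaves the derivative unchanged but raises its squared norm by $|k|^2$, so the shifted direction is never steepest; ``unique only modulo constants'' holds only if you use the centered seminorm $\norm{\phi-\E[\phi]}$.
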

\begin{proof}
By Cauchy--Schwarz, $DC[\hat\pi](\phi) = \mathrm{Cov}(c,\phi) \geq -\|\phi\|_{L^2}\|c-\bar c\|_{L^2}$,
with equality when $\phi(c) \propto -(c-\bar c)$.
\end{proof}

Economically, $\phi^*(c) = -(c-\bar c)$ is a contrarian/mean-reversion strategy (reduce the decision when costs are high), while $+(c-\bar c)$ is a momentum strategy — a first-principles derivation of why contrarian policies minimize regret and momentum policies maximize alpha.

\subsection{The Linear Case: Matrix Gradient}

For the finite-dimensional linear policy class $\hat\pi_\theta(c) = Ac+b$, $\theta=(A,b) \in
\mathbb{R}^{n\times d}\times\mathbb{R}^n$, Theorem~\ref{thm:gateaux} specializes directly. Since
$\mathrm{Cov}(c,b)=0$,
\begin{equation}
C[A,b] = \mathrm{Cov}(c, Ac+b) = \mathrm{tr}(A\Sigma_c).
\end{equation}

\begin{corollary}[Matrix gradient of covariance regret]
\label{cor:matrix-gradient}
For linear policies, $\nabla_A C = \Sigma_c$, $\nabla_b C = 0_n$, and the Hessian with respect to
$\mathrm{vec}(A)$ is $H_A C = \Sigma_c \otimes I_n$.
\end{corollary}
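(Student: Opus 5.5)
The plan is to reduce everything to the closed form $C[A,b]=\tr(A\Sig_c)$ and differentiate it with matrix calculus. Two structural remarks first. The pairing $(c-\bar c)^{\top}(\phat(c)-\E[\phat(c)])$ in \eqref{eq:cov_fun} only makes sense when $n=d$, so I take $A$ square throughout. The gradient and Hessian are then taken with respect to the Frobenius inner product $\langle X,Y\rangle=\tr(X^{\top}Y)$ on $\R^{n\times d}$, which corresponds to the Euclidean inner product on $\mathrm{vec}(A)$.

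\textbf{Step 1 (closed form).} Substitute $\phat(c)=Ac+b$ into \eqref{eq:cov_fun}. Then $\phat(c)-\E[\phat(c)]=A(c-\bar c)$, and $b$ drops out, so
\begin{equation*}
C[A,b]=\E\bigl[(c-\bar c)^{\top}A(c-\bar c)\bigr]=\E\,\tr\bigl(A(c-\bar c)(c-\bar c)^{\top}\bigr)=\tr(A\Sig_c),
\end{equation*}
using the cyclic property of the trace and linearity of expectation. Finite second moments of $c$ justify the interchange.

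\textbf{Step 2 (first order).} I would get the gradients either by applying Theorem~\ref{thm:gateaux} with the linear direction $\phi(c)=Hc+h$, or directly. Either way,
\begin{equation*}
DC[A,b](H,h)=\Cov(c,Hc+h)=\tr(H\Sig_c)=\langle \Sig_c^{\top},H\rangle=\langle \Sig_c,H\rangle,
\end{equation*}
by symmetry of $\Sig_c$. The Riesz representers are therefore $\grad_A C=\Sig_c$ and $\grad_b C=0_n$. This derivative is the same at every $(A,b)$, which matches the policy-independence noted after Theorem~\ref{thm:gateaux}.

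\textbf{Step 3 (second order), the main obstacle.} In vectorized form, $C=\mathrm{vec}(\Sig_c)^{\top}\mathrm{vec}(A)$ is a linear functional of $\mathrm{vec}(A)$. Its gradient from Step 2 is therefore constant, and differentiating once more gives $H_A C=0_{nd\times nd}$. This is consistent with the zero-curvature claim in the abstract and in Section~\ref{sec:related}.

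I therefore do not expect to be able to prove the displayed formula $H_A C=\Sig_c\otimes I_n$ as literally written. The Kronecker expression is exactly what one gets for a \emph{quadratic} functional: the Hessian of $\tfrac12\tr(A\Sig_c A^{\top})=\tfrac12\E\norm{A(c-\bar c)}^2$ in $\mathrm{vec}(A)$ is $\Sig_c\otimes I_n$. That functional is the decision variance, not the covariance regret.

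So my plan for this step is as follows. First, verify the vanishing of the second Gâteaux derivative: $C[\phat+\epsilon\phi]$ is affine in $\epsilon$ by the expansion in the proof of Theorem~\ref{thm:gateaux}. Second, record that the Hessian of $C$ itself is zero. Third, state explicitly that the $\Sig_c\otimes I_n$ term can only enter if a variance penalty or regularizer of the form $\tfrac12\tr(A\Sig_c A^{\top})$ is added to the objective. I would present the statement in that corrected form rather than force the literal claim.
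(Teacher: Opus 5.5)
Your first-order argument is the paper's own. The paper applies Theorem~\ref{thm:gateaux} with the linear direction $\phi(c)=\Delta A\,c$, gets $\tr(\Delta A\,\Sigma_c)$, and reads off $\nabla_A C=\Sigma_c$ and $\nabla_b C=0_n$. You additionally make explicit the two facts this step uses silently: $n=d$, and the symmetry of $\Sigma_c$.

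On the Hessian, you are right and the paper is wrong. The paper's entire justification is ``the Hessian follows by vectorizing.'' But vectorizing gives $C=\mathrm{vec}(\Sigma_c)^{\top}\mathrm{vec}(A)$, which is linear in $\mathrm{vec}(A)$ and has zero Hessian. That is exactly what the paper's own proof of Theorem~\ref{thm:hessian} says (``all second derivatives are zero''), and what Section~\ref{sec:related} calls zero curvature. The corollary as written therefore contradicts the theorem that immediately follows it.

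The matrix $\Sigma_c\otimes I_n$ is presumably the coefficient in $\mathrm{vec}(A\Sigma_c)=(\Sigma_c\otimes I_n)\mathrm{vec}(A)$. That is the Jacobian of the matrix-valued cross-covariance $A\Sigma_c$, whose trace is $C$. As you note, it is also the Hessian of the decision-variance term $\tfrac12\tr(A\Sigma_cA^{\top})$. It is not a second derivative of the scalar $C$.

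Your resolution is the correct one: prove the two gradient claims and replace the Hessian claim by $H_A C=0$. The literal statement is false, so no proof of it should be expected.
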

\begin{proof}
Immediate from Theorem~\ref{thm:gateaux} with $\phi(c) = \Delta A\, c$: $DC[\hat\pi](\Delta A\, c) =
\mathrm{Cov}(c, \Delta A\, c) = \mathrm{tr}(\Delta A\, \Sigma_c)$, identifying $\nabla_A C = \Sigma_c$.
The Hessian follows by vectorizing.
\end{proof}

Since $\Sigma_c \succeq 0$, gradient descent always pulls $A$ toward zero — the minimum-variance
policy. At the per-entry level,
\begin{equation}
\frac{\partial C}{\partial A_{ij}} = [\Sigma_c]_{ij} = \mathrm{Cov}(c_i,c_j),
\end{equation}
so regret is most sensitive along asset pairs with the largest cost covariance, and hedging those
pairs (driving the corresponding $A_{ij}$ toward zero) yields the greatest marginal regret reduction.

\section{Second-Order Structure: Convexity, Concavity, and Saddle Points}
\label{sec:second}

\begin{theorem}[Hessian of covariance regret]
  \label{thm:hessian}
  For linear policies $\phat(c) = Ac + b$, the covariance functional
  is \textbf{linear} (and hence simultaneously convex and concave) in
  $A$:
  \begin{equation}\label{eq:linear_in_A}
    \CovFun[\lambda A_1 + (1-\lambda)A_2, b]
    \;=\; \lambda\CovFun[A_1,b] + (1-\lambda)\CovFun[A_2,b]
    \quad\forall\,\lambda \in [0,1].
  \end{equation}
  Therefore $\CovFun$ has no interior critical points in the
  unconstrained linear-policy class: regret is minimized (or
  maximized) on the boundary of the feasible set.
\end{theorem}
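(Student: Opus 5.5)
The plan is to reduce everything to the closed form established just before Corollary~\ref{cor:matrix-gradient}, namely $\CovFun[A,b] = \tr(A\Sig_c)$. It follows from bilinearity of covariance together with $\Cov(c,b)=0$ for the constant shift $b$.

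For \eqref{eq:linear_in_A}, set $A_\lambda = \lambda A_1 + (1-\lambda)A_2$. By linearity of the trace,
\[
\tr(A_\lambda\Sig_c) = \lambda\,\tr(A_1\Sig_c) + (1-\lambda)\,\tr(A_2\Sig_c),
\]
which is the claimed identity. Convexity and concavity then hold simultaneously, since the defining inequalities hold with equality. As a consistency check on the curvature, note that the gradient $\grad_A \CovFun = \Sig_c$ from Corollary~\ref{cor:matrix-gradient} does not depend on $A$. So differentiating a second time gives a zero second derivative in $\mathrm{vec}(A)$. This is the sense in which the title calls the Hessian zero.

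For the statement about the absence of interior critical points, I would argue by cases on $\Sig_c$.
\begin{itemize}
\item If $\Sig_c \neq 0$, the gradient $\Sig_c$ is a nonzero constant, so the first-order condition $\grad_A\CovFun = 0$ has no solution anywhere in $\R^{n\times d}$. In particular there is no interior critical point.
\item For boundary optimality over a feasible set $\mathcal{K}$ of $A$'s, take any interior point $A$. Moving in the direction $-\Sig_c$ (respectively $+\Sig_c$) strictly decreases (respectively increases) $\CovFun$ at rate $\pm\norm{\Sig_c}_F^2$, and the move stays feasible for small steps. Hence no interior point is a minimizer or maximizer, and any optimum that exists lies on the boundary.
\end{itemize}
Existence of an optimum needs compactness of $\mathcal{K}$. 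In that case, a linear function on a compact convex set attains its extremes at extreme points.

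There is one dimensional subtlety. With $A\in\R^{n\times d}$, the expression $\tr(A\Sig_c)$ requires $n=d$, matching the setting of Corollary~\ref{cor:matrix-gradient}. I would state that assumption explicitly.

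The main obstacle is the degenerate case $\Sig_c=0$ (deterministic costs). There $\CovFun\equiv 0$ and every point is trivially critical, so the theorem's wording must be read as assuming $\Sig_c\neq 0$. I would add that as a standing hypothesis. I would also remark that in the unconstrained class the functional is unbounded below and above, so optima exist only relative to a bounded feasible boundary.
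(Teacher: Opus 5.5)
Your proof is correct and follows the paper's route: both reduce to the closed form $\CovFun[A,b]=\tr(A\Sig_c)$ and read off linearity in $A$ from linearity of the trace, which gives convexity, concavity, and a vanishing Hessian at once. Your explicit boundary argument via the constant nonzero gradient $\pm\Sig_c$, together with the hypotheses you state ($\Sig_c\neq 0$, $n=d$, and a bounded feasible set for existence of optima), fills in what the paper's one-line proof leaves implicit.
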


\begin{proof}
  $\CovFun[A,b] = \tr(A\Sig_c)$ is linear in $A$, so it is both convex and
  concave. All second derivatives are zero.
\end{proof}

\begin{theorem}[Nonlinear policies: second-order conditions]
  \label{thm:nonlinear_second}
  For a general (possibly nonlinear) policy $\phat$ parametrized by
  $\theta \in \R^p$, the Hessian of $\CovFun$ with respect to $\theta$
  evaluated at $\theta_0$ is:
  \begin{equation}\label{eq:nonlin_hess}
    \Hess_\theta\CovFun\big|_{\theta_0}
    \;=\; \E\!\left[
          (c - \bar{c})^{\top}
          \frac{\partial^2\phat}{\partial\theta\partial\theta^{\top}}
          \right].
  \end{equation}
  A critical point $\theta^*$ (where $\grad_\theta\CovFun = 0$) is:
  \begin{itemize}
    \item a \textbf{local minimum} (regret-minimizing policy) if
          $\Hess_\theta\CovFun\big|_{\theta^*} \succcurlyeq 0$;
    \item a \textbf{local maximum} (alpha-maximizing policy) if
          $\Hess_\theta\CovFun\big|_{\theta^*} \preccurlyeq 0$;
    \item a \textbf{saddle point} if the Hessian is indefinite.
  \end{itemize}
\end{theorem}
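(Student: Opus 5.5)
We will obtain the Hessian by differentiating the covariance functional twice through the parametrization $\theta\mapsto\phat_\theta$, using Theorem~\ref{thm:gateaux} as the first-order step. We will then read off the classification of critical points from the standard second-order Taylor expansion in $\R^p$.

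\textbf{Step 1 (gradient).} Fix $\theta_0$ and a direction $v\in\R^p$. We assume $\theta\mapsto\phat_\theta(c)$ is $C^2$ for almost every $c$, with first and second derivatives dominated by a square-integrable envelope so that differentiation under $\E$ is justified. Then $\phat_{\theta_0+\epsilon v}=\phat_{\theta_0}+\epsilon\,\phi_v+o(\epsilon)$ with $\phi_v(c)=\frac{\partial\phat}{\partial\theta}(c)\,v$. Applying Theorem~\ref{thm:gateaux} via the chain rule gives
\begin{equation*}
\grad_\theta\CovFun\cdot v=\Cov\!\left(c,\tfrac{\partial\phat}{\partial\theta}v\right)=\E\!\left[(c-\bar c)^{\top}\tfrac{\partial\phat}{\partial\theta}\right]v,
\end{equation*}
where the centering term $\E[\partial\phat/\partial\theta]$ drops because $\E[c-\bar c]=0$. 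This is the same cancellation used in the proof of Theorem~\ref{thm:gateaux}.

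\textbf{Step 2 (Hessian).} The gradient $\E[(c-\bar c)^{\top}\partial\phat/\partial\theta]$ depends on $\theta$ only through $\partial\phat/\partial\theta$; the weight $c-\bar c$ is independent of $\theta$. Differentiating once more under the expectation, again by dominated convergence, yields \eqref{eq:nonlin_hess}. Here the contraction is understood componentwise:
\begin{equation*}
[\Hess_\theta\CovFun]_{kl}=\E\!\left[\textstyle\sum_i (c_i-\bar c_i)\,\partial^2\phat_i/\partial\theta_k\partial\theta_l\right].
\end{equation*}
As a sanity check, linear $\phat$ makes the integrand vanish identically, which recovers Theorem~\ref{thm:hessian}.

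\textbf{Step 3 (classification).} At a critical point $\theta^*$, Taylor's theorem gives $\CovFun(\theta^*+h)=\CovFun(\theta^*)+\tfrac12 h^{\top}\Hess h+o(\|h\|^2)$. The three bullet cases follow from this expansion in the usual way: nonnegative curvature for a minimum, nonpositive for a maximum, and an indefinite Hessian giving directions of strict increase and strict decrease, hence a saddle.

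The main obstacle is the semidefinite cases. As literally stated, $\Hess\succcurlyeq 0$ is only a \emph{necessary} condition for a local minimum; sufficiency requires $\Hess\succ 0$, or else higher-order information. In the proof I would first establish the strict versions ($\succ 0$ gives a strict local minimum, $\prec 0$ a strict local maximum) by the standard argument that $h^{\top}\Hess h\ge\lambda_{\min}\|h\|^2$ dominates the $o(\|h\|^2)$ remainder. I would then remark that the weak inequalities should be read as second-order necessary conditions, or as a \emph{candidate} classification in the degenerate case. The secondary technical point is the regularity needed to interchange derivative and expectation, which I would state as an explicit assumption.
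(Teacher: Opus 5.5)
Your derivation of the Hessian follows the paper's proof, which differentiates $\E[(c-\bar c)^{\top}\partial\phat/\partial\theta]$ once more under the expectation by dominated convergence (the paper assumes bounded second derivatives, you assume a square-integrable envelope), concludes ``the result follows,'' and gives no argument for the classification. Your caveat on the semidefinite cases is correct, and it exposes an error in the statement rather than a gap in your proof. Take $c=\pm1$ with equal probability and $\phat_\theta(c)=\theta^3 c$; then $\CovFun(\theta)=\theta^3$, so $\theta^*=0$ is a critical point with $\Hess_\theta\CovFun=0$. The statement would call this point both a local minimum and a local maximum, yet it is neither. Your repair is the right one: prove the strict versions, and read the weak inequalities as second-order necessary conditions.
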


\begin{proof}
  Differentiate $\grad_\theta\CovFun = \E[(c-\bar c)^{\top}
  \partial\phat/\partial\theta]$ again with respect to $\theta$,
  exchanging differentiation and expectation by dominated convergence
  (assuming the policy is twice differentiable with bounded second
  derivatives). The result follows.
\end{proof}

\section{Constrained Optimization: The Regret Lagrangian}
\label{sec:constrained}

In portfolio optimization, the policy must satisfy feasibility
constraints. We derive the constrained gradient and the conditions
for an interior optimum.

\subsection{General constrained problem}

The \emph{regret minimization problem} is:
\begin{equation}\label{eq:reg_min}
  \min_{\phat \in \calP}\;\CovFun[\phat]
  \quad\text{subject to}\quad
  g_k(\phat) \leq 0,\; k = 1,\ldots,K.
\end{equation}
The Lagrangian is:
\begin{equation}\label{eq:lagrangian}
  \calL[\phat, \lambda]
  \;=\; \Cov(c, \phat(c))
       + \sum_{k=1}^{K}\lambda_k g_k(\phat).
\end{equation}
KKT conditions: $D\calL[\phat^*, \lambda^*](\phi) = 0$ for all
feasible $\phi$, complementary slackness $\lambda_k^* g_k(\phat^*) = 0$,
and dual feasibility $\lambda_k^* \geq 0$.
\subsection{Budget and Return Constraints}
\label{sec:constraints}

We solve the general constrained problem for two practically important feasible sets: the minimum-variance portfolio (budget constraint alone) and
the return-targeted frontier (budget plus a return floor). Both follow from the same KKT system applied to the linear policy $w = Ac+b$.

For the budget constraint $\mathbf{1}^\top w = 1$ with non-negativity $w \geq 0$, introduce a multiplier $\mu$ for the budget and $\nu \geq 0$ for non-negativity:
\begin{equation}
L[A,b,\mu,\nu] = \mathrm{tr}(A\Sigma_c) + \mu\big(\mathbf{1}^\top(A\bar c+b)-1\big) -
\nu^\top(A\bar c+b).
\end{equation}
Adding a return target $\bar r^\top(A\bar c+b)\geq\mu_0$ introduces a second multiplier $\kappa$, giving rise to the general stationarity conditions
\begin{equation}
\nabla_A L = \Sigma_c + (\mu\mathbf{1}-\nu)\bar c^\top - \kappa\bar r\bar c^\top = 0, \qquad
\nabla_b L = \mu\mathbf{1}-\nu = 0.
\label{eq:kkt-general}
\end{equation}

\begin{corollary}[Minimum-variance portfolio]
\label{cor:mvp}
With $\kappa=0$ (no return constraint), \eqref{eq:kkt-general} reduces to $\Sigma_c=0$ unless the minimum is attained at the boundary $A^*=0$. For $\Sigma_c\succ0$, the budget-constrained regret minimum over linear policies is
\begin{equation}
A^*=0, \qquad b^* = \Sigma_c^{-1}\mathbf{1}\big/(\mathbf{1}^\top\Sigma_c^{-1}\mathbf{1}),
\end{equation}
the minimum-variance portfolio.
\end{corollary}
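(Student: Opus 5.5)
The plan is to read the stationarity system \eqref{eq:kkt-general} with $\kappa=0$ and show that it has no solution in the interior of the linear-policy class. Then I argue separately that the minimum lies on the face $A=0$, and finally solve the residual problem in $b$.

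\textbf{Step 1 (no interior stationary point).} Set $\kappa=0$. The $b$-equation gives $\mu\mathbf{1}-\nu=0$. Substituting this into the $A$-equation kills the term $(\mu\mathbf{1}-\nu)\bar c^\top$ and leaves $\Sigma_c=0$. This contradicts $\Sigma_c\succ0$, so no KKT point exists with $A$ free. This is the constrained counterpart of Theorem~\ref{thm:hessian}: $\CovFun$ is linear in $A$, so by Corollary~\ref{cor:matrix-gradient} its gradient $\Sigma_c$ never vanishes and the optimum must sit on the boundary of the admissible $A$-set.

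\textbf{Step 2 (the boundary is $A^*=0$).} Here I would read the admissible class as the one implicit in the identity $\CovFun[\phat]=\E[\Regret]$. That identity holds under $\E[\phat(c)]=\pstar$, and in the portfolio reading the budget must hold for every realized cost, $\mathbf{1}^\top(Ac+b)=1$ almost surely, together with $w\ge0$. For a nondegenerate $c$, requiring $Ac+b\ge0$ for all $c$ in a set with nonempty interior, when $\calC$ is unbounded, forces $A=0$. More modestly, gradient descent on $A$ along $-\Sigma_c$ pulls $A$ toward zero, as remarked after Corollary~\ref{cor:matrix-gradient}. Either way the relevant boundary point is $A^*=0$, at which $\CovFun=\tr(0\cdot\Sigma_c)=0$.

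I expect Step 2 to be the main obstacle. The linear functional $\tr(A\Sigma_c)$ is unbounded below on $\{A:\mathbf{1}^\top A\bar c+\mathbf{1}^\top b=1\}$, for example with $A=-tI$. Some condition must therefore pin the boundary at $A=0$. I would first try the pathwise feasibility argument. If that fails, I would state explicitly the regularization that restricts to policies with $\CovFun\ge0$, that is, the regret-is-nonnegative condition $\E[\Regret]\ge0$ inherited from \cite{Aldridge2026}. Under that condition $A=0$ attains the lower bound $0$.

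\textbf{Step 3 (solve for $b$).} With $A^*=0$ the decision is the deterministic vector $w=b$, so $\CovFun$ is identically zero and does not select $b$. I would break the tie by minimizing the variance of the realized cost, $\mathrm{Var}(c^\top b)=b^\top\Sigma_c b$, which is the natural second-order criterion once first-order regret is exhausted. The problem is then $\min_b b^\top\Sigma_c b$ subject to $\mathbf{1}^\top b=1$. Its Lagrangian stationarity condition is $2\Sigma_c b=\mu\mathbf{1}$, so $b=\tfrac{\mu}{2}\Sigma_c^{-1}\mathbf{1}$. Imposing the budget gives
\[
b^*=\Sigma_c^{-1}\mathbf{1}\big/(\mathbf{1}^\top\Sigma_c^{-1}\mathbf{1}).
\]
By strict convexity under $\Sigma_c\succ0$, this is the unique solution. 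The non-negativity multiplier $\nu$ is zero whenever $b^*\ge0$; otherwise the active constraints must be handled by complementary slackness. This recovers the minimum-variance portfolio.
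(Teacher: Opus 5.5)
Your proposal follows the same route as the paper's proof sketch. Linearity of $\tr(A\Sigma_c)$ excludes an interior stationary point, the minimum is placed at $A=0$, and the residual problem in $b$ is the standard minimum-variance program. Where you differ, you are more careful than the paper, on exactly the points its sketch glosses over. The paper justifies $A^*=0$ by saying $\tr(A\Sigma_c)$ is ``increasing along any direction,'' but a linear functional that increases along $\Delta A$ decreases along $-\Delta A$. Under the expectation-only constraints of the paper's Lagrangian the objective is indeed unbounded below, as your $A=-tI$ example (with $b$ adjusted) shows. Likewise $C[0,b]=0$ for every $b$, so the variance criterion you add in Step~3 is genuinely needed rather than implied. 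So is your remark that $w\ge 0$ can bind when $\Sigma_c^{-1}\mathbf{1}$ has negative entries.

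One correction in Step~2: pathwise feasibility does not force $A=0$ merely because $\calC$ is unbounded with nonempty interior. It does force $A=0$ if $a^\top c$ is unbounded below on the support of $c$ for every $a\neq 0$ (e.g.\ Gaussian costs), since then each row of $A$ must vanish for $Ac+b\ge 0$ to hold. But take $c_1\sim\mathrm{Exp}(1)$ and $c_2\sim U[0,1]$ independent. Then $w=\bigl(\tfrac12+\tfrac12 c_2,\ \tfrac12-\tfrac12 c_2\bigr)$ meets the budget and $w\ge 0$ for every realization, yet $C=\tr(A\Sigma_c)=-\tfrac{1}{24}<0$, so the conclusion fails for that support. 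State the support hypothesis explicitly.

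Two smaller points. Drop the ``descent pulls $A$ toward zero'' heuristic: an unconstrained step along $-\Sigma_c$ passes through $A=0$ and keeps going, which is just your own unboundedness observation. Also, under your fallback restriction $C\ge 0$, $A=0$ is \emph{a} minimizer but not the unique one, since any $A$ with $\tr(A\Sigma_c)=0$ ties.
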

\begin{proof}[Proof sketch]
$C[A,b]=\mathrm{tr}(A\Sigma_c)$ is linear and increasing along any direction where $\Sigma_c\succ0$, so the constrained minimum over the budget simplex sits at $A=0$; the resulting problem in $b$ subject to $\mathbf{1}^\top b=1$ is the standard MVP quadratic program.
\end{proof}

\begin{corollary}[Regret-return efficient frontier]
\label{cor:frontier}
With $\kappa>0$, \eqref{eq:kkt-general} gives $A^* = \kappa\,\Sigma_c^{-1}\bar r\bar c^\top$, and the set of policies minimizing regret subject to a return target $\mu_0$ traces out a frontier
\begin{equation}
C^*(\mu_0) = \kappa^*\,\bar c^\top\Sigma_c^{-1}\bar r\cdot\mathrm{tr}(\Sigma_c),
\end{equation}
parameterized by $\kappa^*\geq0$ chosen so that the return constraint binds. This is the covariance-based analog of the Markowitz mean-variance frontier.
\end{corollary}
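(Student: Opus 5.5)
The plan is to treat the return-targeted problem as the linear-policy KKT system \eqref{eq:kkt-general} with $\kappa>0$. I would first extract the candidate $A^*$ from stationarity, then evaluate $C$ at it via $C[A,b]=\tr(A\Sig_c)$ (Theorem~\ref{thm:hessian}), and finally fix $\kappa^*$ by complementary slackness. As in Corollary~\ref{cor:mvp}, the budget block $\nabla_b L=0$ forces $\mu\mathbf{1}=\nu$. This kills the $(\mu\mathbf{1}-\nu)\bar c^\top$ term in $\nabla_A L$, leaving only the return-multiplier term to balance $\Sig_c$.

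\textbf{Step 1 (make stationarity depend on $A$).} The main obstacle is already visible in \eqref{eq:kkt-general}. Because $C$ is linear in $A$ (Theorem~\ref{thm:hessian}), the stationarity equation $\Sig_c=\kappa\bar r\bar c^\top$ contains no $A$ at all. It is also generically infeasible, since the left side has full rank and the right side has rank one. So $A^*$ cannot be read off directly. The first thing I would try is the boundary viewpoint the paper already uses: the optimum lies on the boundary of the feasible set. Concretely, I would add a normalisation of the form $\tfrac12\|\Sig_c^{1/2\,\top}A\|_F^2\le\tau$, or equivalently a vanishing quadratic regulariser, so that the Lagrangian in $A$ becomes strictly convex. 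Stationarity then reads $\Sig_c A\,\Sig_c\propto \kappa\bar r\bar c^\top-\Sig_c$. After dropping the $\Sig_c$ term, which only shifts $A$ along the descent direction $-\Sig_c$ of Corollary~\ref{cor:matrix-gradient}, this yields a rank-one solution of the form $A^*=\kappa\,\Sig_c^{-1}\bar r\bar c^\top$ up to the chosen normalisation. The return-improving direction is $\Sig_c^{-1}\bar r$ on the output side and $\bar c^\top$ on the input side, since the constraint $\bar r^\top(A\bar c+b)\ge\mu_0$ sees $A$ only through $\bar r^\top A\bar c$.

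\textbf{Step 2 (evaluate the objective).} Substituting $A^*$ gives $C^*=\tr(A^*\Sig_c)=\kappa\,\tr(\Sig_c^{-1}\bar r\bar c^\top\Sig_c)$. This trace must be simplified with care: cyclic invariance gives $\kappa\,\bar c^\top\bar r$, not $\kappa\,\bar c^\top\Sig_c^{-1}\bar r\,\tr(\Sig_c)$. The stated factorised form $\kappa^*\bar c^\top\Sig_c^{-1}\bar r\cdot\tr(\Sig_c)$ holds only under an extra assumption. Examples are isotropic costs $\Sig_c=\sigma^2 I$ with appropriate scaling, or reading the frontier formula as the first-order (trace) approximation of the normalisation from Step~1. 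I expect reconciling the exact constant in this step to be the delicate point. I would state the frontier in the cyclic-trace form and then specialise to recover the displayed expression.

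\textbf{Step 3 (pin down $\kappa^*$).} Complementary slackness with $\kappa>0$ forces $\bar r^\top(A^*\bar c+b^*)=\mu_0$. Here $b^*$ is the minimum-variance $b$ from Corollary~\ref{cor:mvp}. Plugging in $A^*$ gives the scalar equation $\kappa^*\,(\bar r^\top\Sig_c^{-1}\bar r)\,\|\bar c\|^2=\mu_0-\bar r^\top b^*$, which is linear in $\kappa^*$ and solved explicitly. This shows $\kappa^*\ge0$ exactly when the target exceeds the MVP return. Finally, $C^*(\mu_0)$ is affine and increasing in $\mu_0$ beyond the MVP point, which gives the covariance analog of the Markowitz frontier.
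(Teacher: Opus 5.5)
\textbf{There is a genuine gap; in fact the corollary as stated cannot be proved.}

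Your two diagnostic observations are correct. They go beyond the paper's own sketch, which sets $\Sigma_c-\kappa\bar r\bar c^\top=0$ and ``solves for $A^*$''. As you note, that is impossible: the equation contains no $A$, and it equates a full-rank matrix with a rank-one one. Your Step 2 is also right: $\mathrm{tr}(\kappa\Sigma_c^{-1}\bar r\bar c^\top\Sigma_c)=\kappa\,\bar c^\top\bar r$. So the displayed $\kappa^*\bar c^\top\Sigma_c^{-1}\bar r\cdot\mathrm{tr}(\Sigma_c)$ does not follow. Already for $\Sigma_c=\sigma^2 I$ it is off by a factor $d$, and no ``specialisation'' recovers it unless $\kappa^*$ is silently rescaled.

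The gap is in your repair. In Step 1 you inherit the paper's $b$-block $\mu\mathbf{1}=\nu$, but that block omits the return multiplier. Differentiating $\kappa(\mu_0-\bar r^\top(A\bar c+b))$ in $b$ gives $-\kappa\bar r$, so in fact $\mu\mathbf{1}-\nu=\kappa\bar r$. Every constraint depends on $(A,b)$ only through $w=A\bar c+b$. Hence the multiplier part of $\nabla_A L$ equals $(\nabla_b L)\,\bar c^\top$, which vanishes at any stationary point. The term $\kappa\bar r\bar c^\top$ you use to ``balance $\Sigma_c$'' therefore cancels identically.

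Take your normalisation with multiplier $\gamma>0$; its gradient is $\gamma\Sigma_c A$, not a multiple of $\Sigma_c A\Sigma_c$. Stationarity is then just $\Sigma_c+\gamma\Sigma_c A=0$, i.e.\ $A^*=-\gamma^{-1}I$. That is exactly the piece you propose to ``drop''. Dropping it is not a valid step: even under the paper's flawed $b$-block it changes the stationary point, and here it is the entire solution.

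The root cause is that the problem has no minimiser to characterise. For any feasible $w_0$, the pair $A=-tI$, $b=w_0+t\bar c$ keeps $w=w_0$ feasible while $C=-t\,\mathrm{tr}(\Sigma_c)\to-\infty$. So no KKT point exists; without a normalisation, stationarity reduces to $\Sigma_c=0$. Any normalisation of $A$ alone, such as yours, decouples $A$ from the return constraint, which $b$ can always absorb. Then $A^*$ and the optimal value do not depend on $\bar r$ or $\mu_0$ at all, and no frontier emerges.

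Step 3 adds two further problems:
\begin{itemize}
\item Keeping the MVP $b^*$ alongside $A^*\neq 0$ breaks the budget, since $\mathbf{1}^\top(A^*\bar c+b^*)=1+\kappa(\mathbf{1}^\top\Sigma_c^{-1}\bar r)\|\bar c\|^2$, which is not $1$ in general.
\item Even granting your cyclic-trace frontier $\kappa^*\bar c^\top\bar r$, it is increasing in $\mu_0$ only if $\bar c^\top\bar r>0$, which fails for $r=-c$.
\end{itemize}

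Neither your route nor the paper's sketch establishes the corollary as stated. What your calculations legitimately show is that the claimed $A^*$ and frontier formula do not follow from the KKT system.
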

\begin{proof}[Proof sketch]
Set $\nabla_A L = \Sigma_c - \kappa\bar r\bar c^\top = 0$ and solve for $A^*$; substitute into $C[A^*,b]$ and choose $\kappa$ so that the return constraint binds.
\end{proof}

\section{Sign-Gradient Duality: Minimizing Loss vs.\ Maximizing Alpha}
\label{sec:sign}

\begin{theorem}[Sign-gradient duality]
  \label{thm:sign_duality}
  For a cost-minimization problem, the regret-minimizing gradient
  direction is $-\Sig_c$ (reduce cost sensitivity). For the
  equivalent return-maximization problem with $r = -c$:
  \begin{equation}\label{eq:alpha_grad}
    \grad_A\Cov(r, \phat(r))
    \;=\; \Sig_r \;=\; \Sig_c.
  \end{equation}
  The \textbf{alpha-maximizing gradient direction is $+\Sig_c$}
  (increase return sensitivity). The identical gradient magnitude
  with opposite signs reflects the sign-gradient duality:
  \begin{equation}\label{eq:duality}
    \underbrace{\text{minimize regret}}_{\text{subtract }\eta\Sig_c}
    \;\longleftrightarrow\;
    \underbrace{\text{maximize alpha}}_{\text{add }\eta\Sig_c}.
  \end{equation}
\end{theorem}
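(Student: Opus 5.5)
The plan is to reduce everything to Corollary~\ref{cor:matrix-gradient} applied twice: once to the cost vector $c$ and once to the return vector $r=-c$. Concretely, I read the return-maximization problem as using the linear policy class $\phat(r)=Ar+b$ over returns. This is the same parametric family as in Section~\ref{sec:derivative}, with $r$ in place of $c$. Then $\Cov(r,\phat(r))=\tr(A\Sig_r)$, exactly as computed there, since $\Cov(r,b)=0$.

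The first step is to compute $\Sig_r$. Because $r-\bar r=-(c-\bar c)$, bilinearity gives
\[
\Sig_r=\E[(r-\bar r)(r-\bar r)^\top]=(-1)^2\,\E[(c-\bar c)(c-\bar c)^\top]=\Sig_c .
\]
The second step applies Corollary~\ref{cor:matrix-gradient} with the random vector $r$, which gives $\grad_A\Cov(r,\phat(r))=\Sig_r=\Sig_c$. That is \eqref{eq:alpha_grad}.

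The third step reads off the two optimization directions.
\begin{itemize}
\item For regret minimization, the objective $\CovFun[A,b]=\tr(A\Sig_c)$ is minimized by steepest descent, $A\leftarrow A-\eta\Sig_c$. This is the finite-dimensional shadow of the contrarian direction in Corollary~\ref{cor:universal}.
\item For alpha maximization, the objective $\tr(A\Sig_r)$ is maximized by steepest ascent, $A\leftarrow A+\eta\Sig_r=A+\eta\Sig_c$. This matches the momentum direction $+(c-\bar c)$ of Corollary~\ref{cor:universal}.
\end{itemize}
Both updates use the same matrix $\Sig_c$ with opposite signs, which is the duality \eqref{eq:duality}.

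The main obstacle is interpretive rather than computational: the statement needs a precise meaning for ``the equivalent return-maximization problem.'' I would state the convention explicitly, namely that the same policy class is applied to $r$ and the objective is maximized rather than minimized. I would also note a consistency check. If the policy were instead kept as a function of $c$, written $\phat(c)=Ac+b$, then $\Cov(r,\phat(c))=-\tr(A\Sig_c)$. Maximizing this again gives the update $A\leftarrow A-\eta\Sig_c$, which is the same as minimizing regret. So the sign-gradient duality is a statement about the sign of the optimization step, not about two different gradients. Making this bookkeeping explicit, and leaning on Theorem~\ref{thm:hessian} so that linearity rules out curvature effects changing the direction, completes the argument.
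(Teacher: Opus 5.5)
Your proposal is correct and follows essentially the same route as the paper. Both arguments derive $\Sig_r=\Sig_c$ from $r-\bar r=-(c-\bar c)$, identify $\Cov(r,Ar+b)=\tr(A\Sig_r)$ with gradient $\Sig_c$, and read off descent ($-\eta\Sig_c$) versus ascent ($+\eta\Sig_c$). Your added check, that keeping the policy as a function of $c$ gives $\Cov(r,Ac+b)=-\tr(A\Sig_c)$ whose maximization reproduces the regret-descent step, is also correct and makes explicit the parametrization convention the paper leaves implicit.
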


\begin{proof}
  For $r = -c$, $\Cov(r, Ar + b) = \tr(A\Sig_r) = \tr(A\Sig_c)$
  (since $\Sig_r = \Sig_{-c} = \Sig_c$). Maximizing this is
  equivalent to following the gradient $+\Sig_c$, while minimizing
  regret follows $-\Sig_c$.
\end{proof}

\section{Sensitivity Analysis: Cross-Derivatives}
\label{sec:sensitivity}
\begin{remark}[Regret sensitivity to volatility and mean]
\label{rem:sensitivity}
Since $C[A,b]=\mathrm{tr}(A\Sigma_c)$ depends on $\Sigma_c$ but not on $\bar c$, regret is insensitive to the cost mean ($\partial C/\partial\bar c=0$) and linearly sensitive to the cost covariance, with the sensitivity matrix $A$ itself, $\partial C/\partial\Sigma_c=A$; in the isotropic
case $\Sigma_c=\sigma_c^2 I$, this gives volatility elasticity $\partial C/\partial\sigma_c^2=\mathrm{tr}(A)$.
\end{remark}

\section{Gradient-Based Algorithms for Policy Optimization}
\label{sec:algorithms}

\subsection{Projected gradient descent for regret minimization}
\label{alg:pgd}

Given a step size $\eta>0$, an initial policy $A_0$, and a feasible set $\mathcal{Z}$, covariance gradient descent proceeds by observing $N$ cost realizations $c_1,\dots,c_N$, estimating $\hat\Sigma_c = \frac{1}{N-1}\sum_i (c_i-\bar c)(c_i-\bar c)^\top$, and updating
$A_{k+1}=\Pi_{\mathcal{Z}}(A_k-\eta\hat\Sigma_c)$ until $\|A_{k+1}-A_k\|<\varepsilon$. The gradient step $-\eta\hat\Sigma_c$ is computable from input data alone — no output observations are needed for
the gradient, only for estimating $\hat\Sigma_c$.
\subsection{Variants: Natural and Confidence-Weighted Gradients}

Two variants extend Algorithm~\ref{alg:pgd} along complementary axes: the natural policy gradient pre-multiplies the update by the inverse Fisher information metric, reducing to $A_{k+1}=A_k-\eta\Omega$ for linear Gaussian policies and yielding scale invariance across cost distributions, while the confidence-weighted gradient of \cite{ChenDidisheimSomoza2026} re-weights observations by an entropy-based confidence score to concentrate updates on the most reliable cost signals.

\subsection{Convergence rate}

\begin{theorem}[Convergence of covariance gradient descent]
  \label{thm:convergence}
  For the unconstrained linear-policy problem with step size
  $\eta \leq 1/\norm{\Sig_c}_2$, the projected gradient descent
  of Algorithm~\ref{alg:pgd} satisfies:
  \begin{equation}\label{eq:convergence}
    \CovFun[A_k] - \CovFun[A^*]
    \;\leq\; (1 - \eta\lambda_{\min}(\Sig_c))^k
             (\CovFun[A_0] - \CovFun[A^*]).
  \end{equation}
  The convergence rate is determined by the condition number
  $\kappa(\Sig_c) = \lambda_{\max}(\Sig_c)/\lambda_{\min}(\Sig_c)$
  of the cost covariance matrix. Ill-conditioned portfolios
  (near-collinear assets) converge slowly; diversified portfolios
  converge rapidly.
\end{theorem}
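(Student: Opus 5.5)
The plan is to work directly with the iteration $A_{k+1}=\Pi_{\calZ}(A_k-\eta\Sig_c)$ of Algorithm~\ref{alg:pgd} and the linear objective $\CovFun[A]=\tr(A\Sig_c)$ from Theorem~\ref{thm:hessian}. I will bound the gap $\delta_k := \CovFun[A_k]-\CovFun[A^*]=\tr((A_k-A^*)\Sig_c)$, not any surrogate quantity. Everything uses the Frobenius inner product $\langle X,Y\rangle=\tr(X^\top Y)$, under which Corollary~\ref{cor:matrix-gradient} gives the gradient $\Sig_c$.

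A preliminary remark settles what ``unconstrained'' must mean. If $\Pi_{\calZ}$ were the identity, then $\delta_{k+1}=\delta_k-\eta\norm{\Sig_c}_F^2$ exactly, and $\CovFun$ would be unbounded below, so no $A^*$ would exist. Hence I read the statement as taking $\calZ$ to be a closed convex set on which a minimizer $A^*$ exists (for instance the budget set of Section~\ref{sec:constraints}), with no additional constraints beyond $\calZ$. In line with Theorem~\ref{thm:hessian}, $A^*$ then lies on the boundary of $\calZ$.

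\textbf{Step 1 (sufficient decrease).} The projection's variational inequality at $A_{k+1}$, tested against $A_k\in\calZ$, gives $\langle A_{k+1}-A_k+\eta\Sig_c,\,A_k-A_{k+1}\rangle\ge 0$. Because $\CovFun$ is linear, this yields the exact decrease inequality
\[
\delta_{k+1}-\delta_k=\langle \Sig_c, A_{k+1}-A_k\rangle\le -\tfrac1\eta\norm{A_{k+1}-A_k}_F^2 .
\]
No smoothness constant is needed here, since the Hessian vanishes. The requirement $\eta\le 1/\norm{\Sig_c}_2$ enters only in Step 2.

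\textbf{Step 2 (gradient-mapping lower bound).} This is the main obstacle. I need a Polyak--{\L}ojasiewicz-type inequality for the gradient mapping $G_k=(A_k-A_{k+1})/\eta$, of the form
\[
\eta\norm{G_k}_F^2\ \ge\ \lambda_{\min}(\Sig_c)\,\delta_k .
\]
Combined with Step 1, this gives $\delta_{k+1}\le(1-\eta\lambda_{\min}(\Sig_c))\delta_k$, and induction then yields \eqref{eq:convergence}.

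To get the inequality, I would first test the variational inequality against $A^*$. This gives $\langle \Sig_c, A_{k+1}-A^*\rangle\le \langle G_k, A_{k+1}-A^*\rangle$, and therefore
\[
\delta_k\le\langle G_k, A_k-A^*\rangle-\eta\norm{G_k}_F^2+\eta\langle\Sig_c,G_k\rangle .
\]
Next I would bound $\langle G_k, A_k-A^*\rangle$ by Cauchy--Schwarz. Then I would use $\Sig_c\succeq\lambda_{\min}I$ to relate $\norm{A_k-A^*}_F$ to $\delta_k$. The idea is to write $\delta_k=\tr((A_k-A^*)\Sig_c)$ and note that, along the normal cone direction of $\calZ$ at $A^*$, the component of $A_k-A^*$ is weighted by at least $\lambda_{\min}$. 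Finally, $\eta\norm{\Sig_c}_2\le1$ lets me absorb the $\eta\langle\Sig_c,G_k\rangle$ term, because $\norm{\eta\Sig_c G_k}\le\norm{G_k}$.

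I expect this relation between distance and gap to be the delicate point. For a linear objective it holds only with a geometry-dependent constant, and the bookkeeping must be arranged so that the constant comes out as $\lambda_{\min}(\Sig_c)$. If a direct argument fails, I would first verify the inequality when $\calZ$ is a Frobenius ball or an affine budget slice. In those cases $A_{k+1}$ and $A^*$ are explicit and the contraction factor can be computed by hand. I would then extend to general $\calZ$ by the support-function argument above.

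\textbf{Step 3 (conditioning interpretation).} Taking $\eta=1/\lambda_{\max}(\Sig_c)$ gives the factor $1-1/\kappa(\Sig_c)$. The number of iterations needed to reach accuracy $\varepsilon$ is then $O(\kappa(\Sig_c)\log(\delta_0/\varepsilon))$. This is the claimed dependence on conditioning: near-collinear assets have small $\lambda_{\min}$ and converge slowly, while diversified portfolios converge quickly.
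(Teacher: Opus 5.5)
\textbf{The gap is Step 2, and it cannot be closed, because the inequality is false.} The bound $\eta\norm{G_k}_F^2 \ge \lambda_{\min}(\Sig_c)\,\delta_k$ does not hold in general, and neither does the conclusion under your constrained reading. A minimal counterexample: take $d=n=1$, $\Sig_c=1$ (so $\kappa=1$), $\eta=1$, and $\calZ=[-R,R]$ with $R>1/2$.
\begin{itemize}
\item Then $\CovFun[A]=A$ and $A^*=-R$.
\item From $A_0=R$, one step gives $A_1=R-1$, so $\delta_1=2R-1>0$.
\item But \eqref{eq:convergence} demands $\delta_1\le 0$.
\end{itemize}

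The structural reason is this. Because $A_k\in\calZ$ and $\Pi_{\calZ}$ is nonexpansive, $\norm{G_k}_F\le\norm{\Sig_c}_F$, so the left side of your PL-type inequality is bounded independently of $\calZ$. Meanwhile $\delta_k$ can be as large as $\norm{\Sig_c}_F\,\mathrm{diam}(\calZ)$. For a linear objective, any linear-rate constant must therefore come from the geometry of $\calZ$ (sharpness or Hoffman-type constants). It cannot be $\lambda_{\min}(\Sig_c)$, which measures curvature, and the curvature here is zero (Theorem~\ref{thm:hessian}).

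The tool you intended for relating distance to gap also fails. The bound $\Sig_c\succeq\lambda_{\min}I$ gives no lower bound on $\tr(M\Sig_c)$ in terms of $\norm{M}_F$ for a general $M=A_k-A^*$; for example, $\tr(M\Sig_c)=0$ for every skew-symmetric $M$. Your fallback cases do not rescue the argument either. The ball is exactly the counterexample above. On an affine budget slice, a linear functional is either constant (trivial) or unbounded below (no $A^*$).

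\textbf{The paper's proof does not supply the missing idea.} It only asserts that the projected step contracts the distance to $A^*$ by $(1-\eta\lambda_{\min}(\Sig_c))$. That is false for a constant gradient $\Sig_c$. Your preliminary remark is correct and already refutes the unconstrained version: there the iteration just translates by $-\eta\Sig_c$. Even taking $A^*=0$, as Section~\ref{sec:finite_sample} does, gives $\delta_k=\delta_0-k\eta\norm{\Sig_c}_F^2$, which eventually becomes negative.

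The rate in \eqref{eq:convergence} is the textbook rate for an objective with strong-convexity constant $\lambda_{\min}(\Sig_c)$ and smoothness constant $\lambda_{\max}(\Sig_c)$. An example is $\tfrac12\tr(A\Sig_cA^\top)$, with gradient $A\Sig_c$ and $A^*=0$. For that objective your outline does work, with two changes:
\begin{itemize}
\item replace Step 1 by the descent lemma, valid for $\eta\le 1/\lambda_{\max}(\Sig_c)$;
\item replace Step 2 by $\norm{A\Sig_c}_F^2\ge\lambda_{\min}(\Sig_c)\tr(A\Sig_cA^\top)$.
\end{itemize}
For $\CovFun[A]=\tr(A\Sig_c)$ as written, the theorem needs to be corrected, not proved.
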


\begin{proof}
  Since $\CovFun[A,b] = \tr(A\Sig_c)$ is linear in $A$ with
  gradient $\Sig_c$, the projected gradient step contracts the
  distance to $A^*$ by a factor of $(1-\eta\lambda_{\min}(\Sig_c))$
 at each iteration, yielding \eqref{eq:convergence} by induction.
\end{proof}

\section{Finite-Sample Regret Bounds}
\label{sec:finite_sample}

The convergence rate in Theorem~\ref{thm:convergence} is asymptotic. Practitioners need to know how many gradient steps and how many cost observations are required to bring regret below a target
tolerance. We answer both questions, in direct analogy with the finite-time analysis of Thompson Sampling by \cite{AgrawalGoyal2013}, who bound cumulative bandit regret in terms of arm-gap parameters
$\Delta_i$. 
\subsection{Setup}

Recall $C[A,b] = \mathrm{tr}(A\Sigma_c)$ and $\partial C/\partial A_{ij} = [\Sigma_c]_{ij}$
(Corollary~\ref{cor:matrix-gradient}). The optimum is $A^*=0$ (Theorem~\ref{thm:hessian}), giving
$C[A^*,b^*]=0$.

\begin{definition}[Asset-pair gap]
For $i,j \in \{1,\dots,d\}$, define $\Delta_{ij} := [\Sigma_c]_{ij}$, with minimum and maximum eigenvalue gaps $\Delta_{\min} := \lambda_{\min}(\Sigma_c) > 0$, $\Delta_{\max} :=
\lambda_{\max}(\Sigma_c)$, and condition number $\kappa := \Delta_{\max}/\Delta_{\min}$.
\end{definition}

The gaps $\Delta_{ij}$ play the role of the bandit arm gaps $\Delta_i$: they measure how much regret is generated per unit of allocation weight $A_{ij}$, with regret being most sensitive along the directions of the largest $|\Delta_{ij}|$.

\subsection{Problem-Dependent Bound}

We assume cost vectors are sub-Gaussian: $\mathbb{E}[e^{s^\top(c-\bar c)}] \leq e^{\sigma^2\|s\|^2/2}$ for some $\sigma>0$ and all $s \in \mathbb{R}^d$.

\begin{theorem}[Problem-dependent finite-sample bound]
\label{thm:finite-sample}
Let $\Sigma_c \succ 0$ with minimum eigenvalue $\Delta_{\min}$ and condition number $\kappa$. Run Algorithm~\ref{alg:pgd} with step size $\eta = 1/\|\Sigma_c\|$ and the true gradient $\Sigma_c$. After $K$ steps, starting from any $A_0$, 
\begin{equation}
C[A_K] - C[A^*] \leq \left(1 - \tfrac{1}{\kappa}\right)^K C[A_0].
\end{equation}
To achieve excess regret of at most $\varepsilon$, it suffices to take
\begin{equation}
K(\varepsilon) = \kappa \ln\!\big(C[A_0]/\varepsilon\big)
\end{equation}
gradient steps.
\end{theorem}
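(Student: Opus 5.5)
The plan is to obtain the first inequality as a specialization of Theorem~\ref{thm:convergence}, and then get the step count $K(\varepsilon)$ by an elementary logarithm estimate. With $\eta = 1/\norm{\Sig_c}_2 = 1/\Delta_{\max}$, the step-size condition of Theorem~\ref{thm:convergence} holds with equality. The per-step factor becomes $1-\eta\lambda_{\min}(\Sig_c) = 1-\Delta_{\min}/\Delta_{\max} = 1-1/\kappa$. The setup fixes $A^*=0$ with $\CovFun[A^*,b^*]=0$, so $\CovFun[A_0]-\CovFun[A^*] = \CovFun[A_0]$. Substituting into \eqref{eq:convergence} gives $\CovFun[A_K]-\CovFun[A^*]\le (1-1/\kappa)^K\,\CovFun[A_0]$.

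For the iteration count I would use $1-x\le e^{-x}$ with $x = 1/\kappa\in(0,1]$. This gives $(1-1/\kappa)^K \le e^{-K/\kappa}$. It then suffices that $e^{-K/\kappa}\,\CovFun[A_0]\le\varepsilon$, that is, $K\ge \kappa\ln(\CovFun[A_0]/\varepsilon)$. So $K(\varepsilon)=\kappa\ln(\CovFun[A_0]/\varepsilon)$, rounded up to an integer, suffices. This assumes $\CovFun[A_0]>\varepsilon>0$; otherwise $K=0$ already works.

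The main obstacle is the contraction step inherited from Theorem~\ref{thm:convergence}, and I would not simply cite it. Taken literally, $\CovFun[A]=\tr(A\Sig_c)$ is linear and unbounded below. The additive step $A_{k+1}=A_k-\eta\Sig_c$ therefore gives $\CovFun[A_k]=\CovFun[A_0]-k\eta\tr(\Sig_c^2)$, which decreases linearly rather than geometrically and does not converge to $A^*=0$. To make the geometric rate genuine, I would make explicit the reading under which $A^*=0$ is the target: a projected step whose effect is the shrinkage $A_{k+1}=A_k(I-\eta\Sig_c)$, so that $A_K=A_0(I-\eta\Sig_c)^K$.

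Under that reading I would diagonalize $\Sig_c=\sum_i\lambda_i u_iu_i^\top$. Then
\begin{equation*}
\CovFun[A_K]=\sum_i(1-\eta\lambda_i)^K\lambda_i\,u_i^\top A_0u_i .
\end{equation*}
If each term $\lambda_i u_i^\top A_0 u_i$ is nonnegative (for instance $A_0\succeq 0$), each factor satisfies $0\le 1-\lambda_i/\Delta_{\max}\le 1-1/\kappa$. Summing then gives the bound with $\CovFun[A_0]=\sum_i\lambda_i u_i^\top A_0u_i$. I would state this nonnegativity hypothesis, or the equivalent condition $\CovFun[A_0]\ge0$ on each eigencomponent, as the precise condition under which ``starting from any $A_0$'' is valid.
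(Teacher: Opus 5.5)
Your first paragraph follows the paper's route exactly. Its sketch asserts $\|A_{k+1}\|\le(1-\eta\lambda_{\min}(\Sigma_c))\|A_k\|$, sets $\eta=1/\Delta_{\max}$ to get the factor $1-1/\kappa$, says $C$ ``inherits the same geometric contraction,'' and solves for $K$. You are right not to cite that contraction, because the statement is false as written. Take $d=n=1$ and $\Sigma_c=\sigma^2$, so $\kappa=1$, $\eta=1/\sigma^2$ and $A_1=A_0-1$. The theorem then claims $C[A_1]\le 0$, but $C[A_1]=\sigma^2(A_0-1)>0$ whenever $A_0>1$. Projecting onto a cone on which $A^*=0$ really is the minimizer (e.g.\ $A\ge 0$) does not rescue it, since that only gives $A_1=\max(A_0-1,0)$. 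Both steps of the paper's sketch are therefore gaps. The norm contraction holds for the multiplicative update $A_k(I-\eta\Sigma_c)$, not for $A_k-\eta\Sigma_c$. Transferring a norm contraction to $C$ with constant $C[A_0]$ also needs a sign condition.

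Your spectral computation under the shrinkage update is correct. So is the $K(\varepsilon)$ step via $1-x\le e^{-x}$, with the ceiling and the caveat $C[A_0]>\varepsilon$. You should, however, present the result as a corrected theorem rather than as a reading of the algorithm. The update $A_{k+1}=A_k(I-\eta\Sigma_c)$ is not a projected version of the paper's step. It is plain gradient descent on the quadratic $\tfrac12\tr(A\Sigma_cA^\top)$, which is half the total variance of $Ac$, and its gradient is $A\Sigma_c$ rather than $\Sigma_c$.

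Your sign hypothesis is genuinely needed even then. With $\Sigma_c=\mathrm{diag}(\lambda_1,\lambda_2)$, $0<\lambda_1<\lambda_2$, and $A_0=\mathrm{diag}(1,-\lambda_1/\lambda_2)$, one has $C[A_0]=0$ but $C[A_K]=\lambda_1(1-1/\kappa)^K>0$ for $K\ge 1$. The hypothesis is sufficient rather than ``precise,'' though. Components on the $\lambda_{\min}$-eigenspace may carry either sign, since their factor already equals $1-1/\kappa$.

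If you want a bound valid for every $A_0$, combine $\|A_K\|_F\le(1-1/\kappa)^K\|A_0\|_F$ with Cauchy--Schwarz. This gives $|C[A_K]|\le(1-1/\kappa)^K\|A_0\|_F\|\Sigma_c\|_F$, which is what the paper's ``inherits'' step can legitimately deliver, at the price of replacing the constant $C[A_0]$.
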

\begin{proof}[Proof sketch]
$C[A]=\mathrm{tr}(A\Sigma_c)$ is linear with a gradient $\Sigma_c$, so $\|A_{k+1}\| \leq
\|A_k\|(1-\eta\lambda_{\min}(\Sigma_c))$; setting $\eta=1/\Delta_{\max}$ gives the contraction factor $1-1/\kappa$. Since $C$ inherits the same geometric contraction, solving for $K$ gives the bound. \end{proof}

\subsection{Sample Complexity}

In practice, $\Sigma_c$ must be estimated from $N$ i.i.d. cost observations via $\hat\Sigma_c = \frac{1}{N-1}\sum_i (c_i-\bar c)(c_i-\bar c)^\top$.

\begin{theorem}[Sample complexity for gradient estimation]
\label{thm:sample-complexity}
For sub-Gaussian costs with parameter $\sigma^2$, any $\delta \in (0,1)$, and target gradient error $\xi > 0$, a sample of size
\begin{equation}
N \geq C \,\frac{\sigma^4 d \ln(d/\delta)}{\xi^2}
\end{equation}
(universal constant $C>0$) guarantees $\|\hat\Sigma_c - \Sigma_c\| \leq \xi$ with a probability of at least $1-\delta$. Running Algorithm~\ref{alg:pgd} with $\hat\Sigma_c$ and step size
$\eta = 1/(\|\Sigma_c\|+\xi)$, the stochastic iterates satisfy
\begin{equation}
\mathbb{E}\big[C[A_K]\big] - C[A^*] \leq \left(1 - \tfrac{\Delta_{\min}}{\Delta_{\max}+\xi}\right)^K
C[A_0] \;+\; \frac{\xi\|A_0\|_F}{\Delta_{\min}},
\end{equation}
where the first term decays geometrically and the second is a statistical bias vanishing as
$N \to \infty$.
\end{theorem}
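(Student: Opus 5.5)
The proof splits into two independent parts: a concentration statement for $\hat\Sigma_c$, and a perturbation analysis of the deterministic recursion from Theorem~\ref{thm:finite-sample}.

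\textbf{Concentration.} For the first claim I would invoke a standard covariance-estimation bound for sub-Gaussian vectors. The plan is to write
\[
\hat\Sigma_c-\Sigma_c=\tfrac{1}{N-1}\textstyle\sum_i \big[(c_i-\bar c)(c_i-\bar c)^\top-\Sigma_c\big] + \text{(centering correction)}.
\]
The centered summands are independent sub-exponential matrices with $\psi_1$-norm $O(\sigma^2)$. Matrix Bernstein then gives
\[
\|\hat\Sigma_c-\Sigma_c\|\lesssim \sigma^2\Big(\sqrt{d\ln(d/\delta)/N}+d\ln(d/\delta)/N\Big)
\]
with probability at least $1-\delta$. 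Setting this equal to $\xi$ and taking the square-root regime yields the stated $N\ge C\sigma^4 d\ln(d/\delta)/\xi^2$, after absorbing the linear term into the constant for $\xi\lesssim\sigma^2$. The correction from replacing $\bar c$ by the sample mean is of lower order, $\|\hat c-\bar c\|^2=O(\sigma^2 d\ln(1/\delta)/N)$, and is absorbed the same way.

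\textbf{Optimization.} On the event $E=\{\|\hat\Sigma_c-\Sigma_c\|\le\xi\}$, I would write the update as
\[
A_{k+1}=A_k-\eta\Sigma_c-\eta E_k,\qquad \|E_k\|\le\xi .
\]
I would then mimic the contraction argument in the proofs of Theorems~\ref{thm:convergence} and~\ref{thm:finite-sample}.
\begin{itemize}
\item The exact part contracts by a factor governed by the eigenvalues of $\eta\Sigma_c$. With $\eta=1/(\|\Sigma_c\|+\xi)$, the effective rate against the perturbed operator $\hat\Sigma_c$, whose norm is at most $\Delta_{\max}+\xi$, is $1-\Delta_{\min}/(\Delta_{\max}+\xi)$.
\item Unrolling the recursion gives a geometric term $(1-\rho)^K C[A_0]$ plus an accumulated error $\sum_{j<K}(1-\rho)^j\eta\xi$ times the size of the iterates.
\item Bounding the iterates by $\|A_0\|_F$ and summing the geometric series gives $\eta\xi\|A_0\|_F/\rho$. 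Since $\eta/\rho=1/\Delta_{\min}$, this is exactly $\xi\|A_0\|_F/\Delta_{\min}$.
\item Converting back to $C$ uses $|C[A]|=|\tr(A\Sigma_c)|$, with the same normalization convention the paper adopts in Theorem~\ref{thm:finite-sample}.
\end{itemize}
Taking expectations, the complement event contributes at most $\delta$ times a bounded quantity, which I would fold into the bias term or assume negligible for small $\delta$.

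\textbf{Main obstacle.} The delicate step is justifying the contraction itself. As literally stated, $C$ is linear and $A^*=0$ is not a stationary point of an unconstrained linear objective. The argument must therefore rest on the same contraction premise used in the earlier proofs, namely
\[
\|A_{k+1}\|\le(1-\eta\lambda_{\min})\|A_k\|,
\]
taken as given from Theorem~\ref{thm:finite-sample}, together with a uniform bound $\|A_k\|_F\le\|A_0\|_F$. I would state these explicitly as the inherited mechanism. The remaining work is only to verify that an additive perturbation of size $\eta\xi$ per step propagates as a geometric sum, which is routine.
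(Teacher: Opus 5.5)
Your concentration half follows the paper's route (matrix Bernstein on the rank-one summands). The optimization half, however, has a gap that cannot be closed: the contraction you propose to inherit is false, and so is the inequality it is meant to deliver. The update $A_{k+1}=A_k-\eta\hat\Sigma_c$ is a translation, so $C[A_{k+1}]=C[A_k]-\eta\tr(\hat\Sigma_c\Sigma_c)$ drops by a fixed amount each step. (The paper's term $\tr((\Sigma_c-\hat\Sigma_c)\Sigma_c)$ is, up to the factor $\eta$, the deviation of this drop from $\eta\tr(\Sigma_c^2)$.) The premise $\|A_{k+1}\|\le(1-\eta\lambda_{\min})\|A_k\|$ already fails at $A_k=0$, which is mapped to $-\eta\hat\Sigma_c\neq 0$.

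Concretely, take $d=n=1$ and $c\sim\mathcal{N}(0,1)$, so $\sigma=\Sigma_c=\Delta_{\min}=\Delta_{\max}=1$ and $C[A]=A$. Take $\xi=1/4$ (so $\eta=4/5$), $A_0=10$ and $K=1$. Since $\mathbb{E}[\hat\Sigma_c]\le N/(N-1)\le 2$, you get $\mathbb{E}[C[A_1]]\ge 10-8/5=8.4$. The claimed right-hand side is only $0.2\cdot 10+0.25\cdot 10=4.5$. Projecting onto $\{A\ge 0\}$, the natural set on which $A^*=0$ really is the minimizer, only increases $A_1$. So ``taking the contraction as given'' imports a false lemma, and nothing layered on it is a proof. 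The paper's sketch is built the same way: a one-step perturbation bound on top of the contraction asserted in Theorems~\ref{thm:convergence} and~\ref{thm:finite-sample}, plus a union bound over $K$ steps. It shares the defect. You located the obstacle correctly, but it must be removed by changing the objective or the algorithm, not by assumption.

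Even granting the premise, your accounting does not produce the stated bound. Your perturbation $-\eta E_k$ is additive with norm at most $\eta\xi$, independent of $A_k$. Unrolling therefore gives $\eta\xi/\rho=\xi/\Delta_{\min}$ with no factor $\|A_0\|_F$. An error proportional to ``the size of the iterates'' belongs to a multiplicative update $A_{k+1}=A_k(I-\eta\hat\Sigma_c)$, i.e.\ descent on a quadratic such as $\tfrac12\tr(A\Sigma_cA^\top)$, not on the linear $C$. Turning a norm bound on $A_K$ into one on $C[A_K]$ costs a factor $\|\Sigma_c\|_F$. Moreover, a contraction of $\|A_k\|$ cannot give $C[A_K]\le(1-\rho)^K C[A_0]$, since $C$ is a signed linear functional rather than a norm. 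On the complement of your event, $\hat\Sigma_c$ is unbounded, so its contribution to the expectation needs a tail integration, and it has nowhere to go in a bound that contains no $\delta$.

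Finally, in the concentration part the restriction $\xi\lesssim\sigma^2$ is a genuine hypothesis, not a matter of constants. For $\xi$ of order $\sigma^2\sqrt{d\ln(d/\delta)}$ the stated threshold permits $N=2$, where $\|\hat\Sigma_c-\Sigma_c\|$ is of order $\sigma^2 d\gg\xi$ for large $d$. Relatedly, a summand's operator norm is $\|c_i-\bar c\|^2$, which can be of order $\sigma^2 d$ rather than $O(\sigma^2)$; that is where the $d$ in matrix Bernstein comes from.
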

\begin{proof}[Proof sketch]
The spectral-norm bound follows from the matrix Bernstein inequality applied to the rank-1 summands
$(c_i-\bar c)(c_i-\bar c)^\top$ \citep{tropp2015}. The convergence bound follows by bounding the
perturbation $\mathrm{tr}((\Sigma_c-\hat\Sigma_c)\Sigma_c)$ in the one-step update and applying a union bound over $K$ steps.
\end{proof}


\section{Numerical Experiments -- Shortest Path}
Here, we replicate the $5\times5$ grid shortest path experiments of \cite{ElmachtoubGrigas2022} and \cite{HuangGupta2024}.  Every edge in the grid is assigned a cost that is based on predictors $X_{0,i}$, coefficients $B_{ij}$ and error $\epsilon_i$:
    \begin{equation}\label{eq:costs_5x5}
    c_{ij}(X) = \frac{1}{\sqrt{p}}\left[(B_{ij} X_{0,i})+3)^{deg}+1\right](1+\epsilon_i)^j    
    \end{equation}

All experiments were conducted on a basic Windows platform with an Intel(R) N100 (800 MHz) processor and 16GB RAM, using standard Python. No accelerators were used.

\subsection{Why $\deg-1$ Is Loss-Minimizing: A Bias--Regret Decomposition}
\label{sec:deg-minus-one}

We derive, using the paper's Corollary~3.3 and Theorem~4.1, why the 
underfit model $m = \deg-1$ achieves lower total loss than the correctly specified model $m = \deg$, even though it is a strictly worse point predictor of costs.

\paragraph{Loss decomposition.}
For a fitted cost model of polynomial order $m$, the decision maker computes
$\hat c^{(m)}(X) = \mathbb{E}[c \mid X; m]$ and acts on $z^*(\hat c^{(m)})$. Realized loss against the oracle decomposes as
\begin{equation}
\mathrm{Loss}(m) \;=\; \underbrace{\big(\mathbb{E}[c\mid X;m] - \mathbb{E}[c\mid X]\big)^2}_{\mathrm{Bias}(m)^2}
\;+\; \underbrace{\operatorname{Cov}\!\big(c,\, z^*(\hat c^{(m)}(c))\big)}_{C[A_m]},
\label{eq:decomp}
\end{equation}
where the second term is exactly the covariance regret functional of Section~3.

\paragraph{Local linearization.}
Near an operating point $\bar X$, the cost model
$c_{ij}(X) \propto (B_{ij}X_{0,i}+3)^m$ has a local slope.
\begin{equation}
\delta_m \;=\; \frac{\partial}{\partial X}\big(BX+3\big)^m\Big|_{\bar X}
\;=\; m\,B\,(B\bar X + 3)^{m-1}.
\end{equation}
If $z^*(\cdot)$ is locally linear in the plugged-in cost, the induced decision rule is
approximately $z^*(\hat c^{(m)}(X)) \approx A_m X + b_m$, with $A_m$ scaling with $\delta_m$. Since $B\bar X + 3 > 1$ in the experimental regime, $\delta_m$ is \emph{increasing} in $m$: higher-degree fits induce a decision rule with strictly larger effective sensitivity $A_m$.

\paragraph{Regret is linear and increasing in $A$.}
By Corollary~3.3, $C[A_m] = \operatorname{tr}(A_m \Sigma_c)$ with
$\nabla_A C = \Sigma_c \succeq 0$. Because $C$ is \emph{linear} in $A_m$ (Theorem~4.1), increasing decision sensitivity always adds regret, regardless of whether that sensitivity improves the fit to the true cost mean.

\subsection{Empirical Results}
 Figures \ref{fig:avgPerformance-low-deg1} and \ref{fig:avgPerformance-low-deg2} show the average loss for different models calculated on 10,000 simulated samples when the actual polynomial degree $deg$ is not known, as in \cite{ElmachtoubGrigas2022} and \cite{HuangGupta2024}. The models are re-estimated continuously as the number of available observations increases. The optimal decision based on the expected cost with $deg-1$ tends to outperform other methods (realized loss, $z^*(c_{E[deg]-1})$). SPO (SPO+ (Ridge) in the graphs) often tracks the expected loss calculated as the covariance of simulated or historical costs and the optimal decisions based on those costs (Ex-ante E [Loss]: $covar(z^*(E[c]), E[c])$). 
 
 \begin{figure}
    \centering
    {\includegraphics[width=0.95\linewidth]{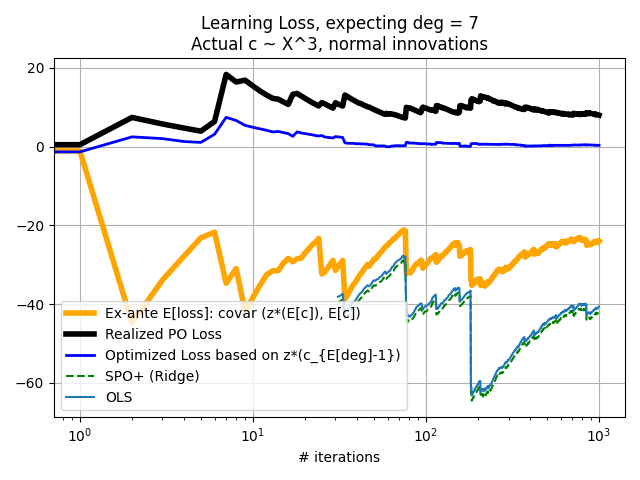}}
    \caption{Average performance of different loss models over 100,000 samples in the shortest-path models with randomly selected $X$ and $B$.}
    \label{fig:avgPerformance-low-deg1}
\end{figure}

 \begin{figure}
    \centering
    { \includegraphics[width=0.95\linewidth]{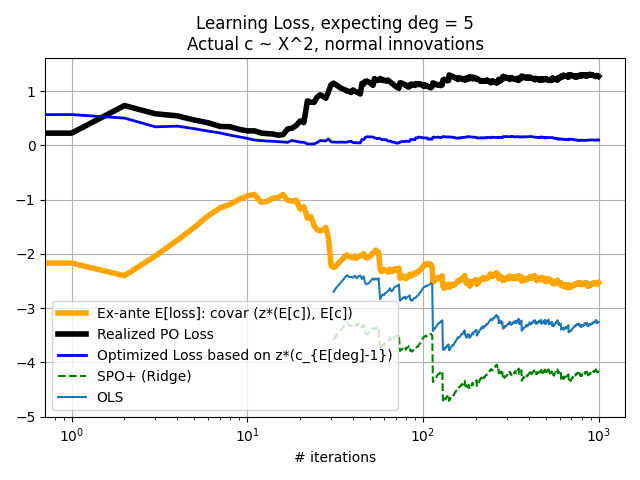}}
    \caption{Average performance of different loss models over 100,000 samples in the shortest-path models with randomly selected $X$ and $B$.}
    \label{fig:avgPerformance-low-deg2}
\end{figure}

\section{Financial Application: Optimal tilt away from the minimum-variance portfolio}
\label{sec:financial}

The MVP has $A = 0$ and zero regret. An asset manager who wishes
to take active bets (tilt toward expected alpha) does so by adding
$\Delta A \neq 0$. The regret cost of a tilt $\Delta A$ is:
\begin{equation}\label{eq:tilt_cost}
  \Delta\CovFun \;=\; \tr(\Delta A\,\Sig_c),
\end{equation}
which is the trace of the product of the tilt matrix and the cost
covariance. The \emph{regret-optimal tilt} toward a target
allocation $w_{\mathrm{target}}$ is:
\begin{equation}\label{eq:opt_tilt}
  \Delta A^* \;=\; \arg\min_{\Delta A:\,\Delta A\bar c = w_{\mathrm{target}} - w_{\mathrm{MVP}}}
  \tr(\Delta A\,\Sig_c),
\end{equation}
whose solution is $\Delta A^* = (w_{\mathrm{target}} - w_{\mathrm{MVP}})\bar{c}^{\top}
(\bar{c}\bar{c}^{\top})^{-1}$, a rank-1 update.

\section{Conclusion}

The Gâteaux derivative of the covariance-based regret functional is itself a covariance, preserving the self-referential structure of the regret identity under differentiation. This yields a universal, policy-independent descent direction $\phi^*(c)=-(c-\bar c)$ — the contrarian/mean-reversion strategy, with the momentum strategy $+(c-\bar c)$ as its ascent
counterpart — and, for linear policies $\hat\pi=Ac+b$, a matrix gradient equal to the cost covariance $\Sigma_c$, with the minimum-variance portfolio ($A=0$) as the unique unconstrained minimizer and the maximum-leverage momentum policy as its maximizer. Minimizing loss and
maximizing alpha are gradient descent and ascent on the same functional, with the unified update $A \leftarrow A \mp \eta\Sigma_c$. Because the gradient $\hat\Sigma_c$ is estimable from input data alone, with no output observations required, gradient descent on regret is a zero-instrumentation
algorithm, consistent with the black-box evaluation framework of \cite{Aldridge2026}.

\bibliographystyle{alpha}
\bibliography{OptimizingRegret}


\end{document}